\theoremstyle{plain}
\newtheorem{theorem}{Theorem}[section]
\newtheorem{lemma}[theorem]{Lemma}
\theoremstyle{definition}
\newtheorem{example}[theorem]{Example}
\def\slfsr{$\sigma$-LFSR}
\def\GFTwo{${\mathbb F}_{2}$}
\def\GFTwom{${\mathbb F}_{2^m}$}
\def\GFq{${\mathbb F}_q$}
\begin{document}

\title[Maximal periodic Xorshift RNGs]
{Xorshift random number generators from primitive polynomials}
\author{Susil Kumar Bishoi}
\address{Center for Artificial Intelligence and Robotics, \newline \indent Defence Research and Development Organisation, \newline \indent CV Raman Nagar, Bengaluru 560093, India \newline and \newline \indent Faculty of Informatics, \newline \indent Masaryk Univerzity, Czechia}
\email{skbishoi@cair.drdo.in}

\author{Surya Narayan Maharana}
\address{Indian Institute of Technology, \newline \indent  Ropar, Punjab 140001, India}
\email{suryan.math@gmail.com}



\keywords{random number generators, xorshift generator, primitive polynomials, linear feedback shift registers; multiple-recursive matrix method}

\subjclass[2010]{94A55, 94A60, 15B33, 12E20 and 12E05.}

\begin{abstract}
A class of xorshift random number generators (RNGs) are introduced by Marsaglia. We have proposed an algorithm which constructs a primitive xorshift RNG from a given primitive polynomial. We also have shown a weakness present in those RNGs and suggested its solution. A separate algorithm also proposed which returns a full periodic xorshift generator with desired number of xorshift operations.

\end{abstract}
\date{\today}
\maketitle

\section{Introduction}\label{intro}
Random bits are required in many areas including in cryptography, computer simulation, statistical sampling, etc. A True Random Number Generator (TRNG) can be used to generate these random bits. However, the TRNG design uses some uncontrollable physical processes as a source of true randomness and in most practical environments this is an inefficient procedure. So, a Pseudo Random Number Generator (PRNG) can be used in place of a TRNG. PRNG takes a small bit length seed (random) as input and produces a very large binary sequence which appears to be random. 
The concept of PRNG motivates the design of stream ciphers and in stream cipher design, Linear Feedback Shift Register (LFSR, see Golomb \cite{golomb}, Lidl and Niederreiter \cite{lidl}) is used as one of the important basic building blocks. 

The LFSR is very popular in hardware as it has fast and low cost of implementation in hardware. If it is primitive, then it produces maximum length periodic bitstream for any nonzero initial state. Also, bitstream generated by the LFSR have very good statistical properties. However, it produces only one new bit in each cycle, so such ciphers are often referred as bit-oriented ciphers and could not take the advantage of available word based modern operations. However, the word based LFSR  called MRMM \cite{GSM,N2,N3,N4} takes this advantage. By Zeng at el., it is called as \slfsr\  \cite{Zeng}. It is shown in \cite{Brent,LECUYER} that Marsaglia's xorshift RNGs are special case of the MRMMs. 

In this paper, we have given an algorithm which constructs xorshift RNGs from a binary primitive polynomial. Later, we have found a weakness in these RNGs generated from this algorithm and suggested a solution to overcome this weakness. The paper is organized as follows. In Section \ref{sec_primLFSR}, we introduce some notations, definitions and results concerning to the primitive LFSRs and xorshift generators. We propose the construction algorithm for xorshift RNGs in Section \ref{sec_construct_xor}. In Section \ref{implementation}, some results and issues pertaining to the xorshift RNGs produced by construction algorithm are discussed. Finally, conclusion is made in Section \ref{conclusion}.

\section {Notation and theory} \label{sec_primLFSR}
Let \GFq\ denote the finite field with $q$ elements, where $q$ is a prime power and \GFq[X] be the ring of polynomials in one variable $X$ with coefficients in \GFq. Denote $M_m$(\GFq) the set of all $m\times m$ matrices with entries in \GFq\ and $GL_m$(\GFq) be the set of all $m \times m$ invertible matrices. 
For $C \in M_m(\mathbb{F}_q)$, $C_{ij}$ denotes the entry of the matrix $C$ at $i^{th}$ row and $j^{th}$ column. For any square matrix $C$, det$(C)=|C|$ denotes its determinant whereas $C^T$ denotes the transpose of the matrix $C$. $ord(C)$ denotes the period of the matrix $C$.
$\lceil n \rceil$ denotes least positive integer greater than or equal to $n$. 
Let $R \in M_m(\mathbb{F}_2)$ be the right shift operator defined as $Rx= (0, x_1,x_2,\hdots, x_{m-1})^T$, where $x=(x_1,x_2,\hdots, x_m)^T \in \mathbb{F}_2^{m}$. Then the matrix form of $R$ is as follows
\[R=\left(%
\begin{array}{ccccc}
  0 & 0 & \cdots & 0 & 0 \\
  1 & 0 & \cdots & 0 & 0 \\
  0 & 1 & \cdots & 0 & 0 \\
  \vdots & \vdots & \ddots & \vdots & \vdots \\
  0 & 0 & \cdots & 1 & 0 \\
\end{array}%
\right)_{m \times m}\]
Similarly, let $L$ be the left-shift operator defined as the transpose of the matrix $R$, i.e., $Lx=L(x_1,x_2,\hdots, x_m)^T = (x_2,x_3,\hdots, x_{m},0)^T$. For a positive integer $k$, $L^k$ means $L$ is applied for $k$ times i.e., $k$ is the amount of shifting in left direction and $R^k$ is defined similarly. It is easy to see that both $L^kx$ and $R^kx=0$ if $k\geq m$. Let $I_m \in GL_m(\mathbb{F}_q)$ be the identity matrix.

\subsection{LFSR} \label{subsec_lfsr}
A sequence $s_0,s_1,s_2,\hdots$ with elements from a finite field \GFq\ is called periodic if there exists a nonnegative integer $p$ such that $s_{i+p}=s_i$ for all $i\geq 0$. 
The smallest such integer $p$ is called the period of the sequence. For a periodic sequence, it is always possible to have a relation called linear recurring relation (LRR) \cite{lidl} 
among the elements as
\begin{equation}\label{eq:lfsr}
 s_{i+n}=-(c_0s_i + c_1s_{i+1} + \cdots + c_{n-1}s_{i+n-1})   
\end{equation}
where $c_i \in $ \GFq\ and the integer $n$ is called the degree of the LRR. It is well known that 
for a given periodic sequence in \GFq\, there is a minimum degree LRR  which satisfy the periodic
sequence. The associated polynomial $f(x) = x^n - c_{n-1} x^{n-1} - \hdots - c_1 x - c_0$ is called
the characteristic polynomial of the LRR.  The companion matrix $T$ of the polynomial $f$ is as follows

\[T=\left(%
\begin{array}{ccccc}
  0 & 1 & 0 & \hdots & 0 \\
  0 & 0 & 1 & \hdots & 0 \\
  \vdots & \vdots & \vdots & \ddots & \vdots\\
  0 & 0 & 0 & \hdots & 1 \\
  c_0 & c_1 & c_2 & \hdots & c_{n-1}
\end{array}%
\right)_{n \times n}\]
If the column vector $S_0=(s_0, s_1,\hdots,s_{n-1})^T \in \mathbb{F}_q^{n}$ is the initial states of the LFSR, then $S_1=T(S_0)=(s_1, s_2,\hdots,s_{n})$ where $s_n$ is calculated using the equation \eqref{eq:lfsr}. The successive states of the LFSR are obtained by repeated application of $T$. If $S_k$ be the states of the LFSR after $k^{th}$ iteration, then $S_k=T^k(S_0)$. Again, it is proved that the sequence generated by the LRR have period ($q^n-1$) if and only if the polynomial associated with the LRR is a primitive polynomial of degree $n$ over the field \GFq\ \cite{lidl,MOV} .

The primitive LFSRs have very nice properties. The primitive LFSRs produce bit sequence which not only have a large period, but also have good statistical properties required for cryptographic applications. Again, they have low cost of implementation in hardware \cite{MOV, Stinson}. So, the LFSRs are quite useful in generation of pseudorandom bit sequences.  However, LFSR prosduce only one new bit in cycle and in many situations such as high speed link encryption, an efficient software encryption technique is required. In such cases, bit-oriented ciphers do not provide adequate efficiency. In case of the LFSR of order $n$, total $n$ shifting along with the feedback computation is needed to produce one bit output. Thus, a single LFSR takes $O(n)$ bit manipulations in order to produce only a single bit. Therefore, in case of software implementation point of view, the LFSR does not take the advantage of the available word based modern processors. However, the generalization of the LFSRs called word based LFSRs (i.e., MMRMs) like xorshift RNGs take this advantage.

\subsection{Xorshift generator } \label{subsec_xorshift}
Xorshift generator \cite{Marsaglia} introduced by Marsaglia is a linear operator $T$, which uses only two word based operations called shifting (both right and left) and exclusive-or (XOR). The basic idea of xorshift generators is that the state is modified by applying repeatedly shift and XOR operations. 
If $\mathbf{S}_0=(\mathbf{s_0}, \mathbf{s_1},\hdots,\mathbf{s_{n-1}})^T \in \mathbb{F}_2^{mn}$ is the initial seed, where each $\mathbf{s_i}$ is $m$-bit in size, then \{$T\mathbf{S}_0, T^2\mathbf{S}_0,T^3\mathbf{S}_0,\hdots$\} is the sequence of words generated by $T$. Note that, in case of xorshift RNGs, $T\mathbf{S}$ can be computed using a small number of xorshift operations  for any $\mathbf{S} \in \mathbb{F}_2^{mn}$. The companion matrix of this operator $T$ in the block form is 

\begin{equation} \label{typeXorShift}
T =
\begin {pmatrix}
\mathbf{0} & I_m & \mathbf{0} & . & \mathbf{0}\\
\mathbf{0} & \mathbf{0} & I_m & . & \mathbf{0}\\
\vdots & \vdots & \vdots & \ddots & \vdots\\
\mathbf{0} & \mathbf{0} & \mathbf{0} & . & I_m\\
(1+L^a)(1+R^b) & \mathbf{0} & \mathbf{0} & . & (1+R^c)
\end {pmatrix}
\end{equation}
where $a,b,c$ are three positive integers and each block is an $m \times m$ matrix. Here $\mathbf{0}$ is the $m \times m$ zero matrix and $T\mathbf{S}_0=(\mathbf{s_1}, \mathbf{s_2}, \hdots,$ $\mathbf{s_{n-1}}, A\mathbf{s_0} + B\mathbf{s_{n-1}})$, where $A=(1+L^a)(1+R^b)$ and $B=(1+R^c)$. So its implementation requires only a few number of xorshift operations per pseudo-random number. Again, the xorshift generators have implementation advantages when the size of the each state in bits is a multiple of the computer word size $m$ (typically $m$ = 32 or 64). The xorshift RNGs are extremely fast and there are several values of triplet $(a,b,c)$ for which the companion matrix $T$ has maximal period. Marsaglia \cite{Marsaglia} lists all those triplets $(a, b, c)$ that yield maximal period xorshift generators with $m=32$ and $m=64$. Later it was verified by Panneton and L'Ecuyer \cite{LECUYER} and also shown some deficiencies after analyzing this class of generators. Brent also discussed a potential problem related to 
correlation of outputs with low Hamming weights and suggested a technique to overcome that problem \cite{Brent}.

From equation-\ref{typeXorShift}, it is clear that  the dimension of the matrix $T$ is $mn \times mn$ and so the maximal period of $T$ could be $2^{mn} - 1$. Let $P(z) = det(T - Iz)$ be the characteristic polynomial of $T$, then $T$ is full periodic (i.e., $ord(T)=2^{mn}-1)$ if and only if $P(z)$ is a primitive polynomial over the binary field \GFTwo \cite{lidl, MOV}. 
The list of triplet $(a,b,c)$ were listed out as follows
\begin{itemize}
 \item It first constructs the matrix $T$ using the triplet$(a,b,c)$ as in equation \eqref{typeXorShift}. 
 \item Checks the primitiveness of the characteristic polynomial $P(z)$ of the matrix $T$.
 \item If $P(z)$ is primitive, then the triplet $(a,b,c)$ is added to the list.
\end{itemize}
In this process, to get one such triplet it needs several attempts for primitiveness checking of the polynomial $P(z)$. Now, we are proposing an algorithm which does the reverse i.e., it first finds a primitive polynomial and then constructs a xorshift generator $T$ from this primitive polynomial. The construction algorithm is described in the following section.  

\section{Construction algorithm for xorshift RNG}\label{sec_construct_xor}
In this section, we present the algorithm which constructs a xorshift RNG from a given primitive polynomial. Let $f(X)=\displaystyle\sum_{i=0}^{mn}a_iX^{i}$ be  a polynomial of degree $mn$ over \GFTwo . Then using the coefficients $a_i$'s define $n$ number of $m \times m$ matrices $C_i$ for $i=1,2,\hdots, n-1$ as below

\begin{equation} \label{C_i}
C_i =
\begin {pmatrix}
0 & 0 & \hdots & 0 & a_i\\
0 & 0 & \hdots & 0 & a_{n+i}\\
0 & 0 & \hdots & 0 & a_{2n+i}\\
\vdots &\vdots &\ddots  &\vdots &\vdots \\ 
0 & 0 & \hdots & 0 & a_{(m-1)n+i}\\
\end {pmatrix}_{m\times m}
\end{equation}
i.e., at least first $(m-1)$ columns are zero columns. Again, define the matrix $C_0$ in the
following form
\begin{equation} \label{C_0}
C_0 =
\begin {pmatrix}
0 & 0 & \hdots & 0 & a_0\\
1 & 0 & \hdots & 0 & a_{n}\\
0 & 1 & \hdots & 0 & a_{2n}\\
\vdots &\vdots &\ddots  &\vdots &\vdots \\ 
0 & 0 & \hdots & 1 & a_{(m-1)n}\\
\end {pmatrix}_{m\times m}
\end{equation}
Now, using the matrix coefficients $C_0, C_1, \hdots, C_{n-1}$, constructs the matrix $T$ of size $mn \times mn$ as given in the equation-\ref{typeMRMM}

\begin{equation} \label{typeMRMM}
T =
\begin {pmatrix}
\mathbf{0} & I_m & \mathbf{0} & . & \mathbf{0}\\
\mathbf{0} & \mathbf{0} & I_m & . & \mathbf{0}\\
\vdots & \vdots & \vdots & \ddots & \vdots\\
\mathbf{0} & \mathbf{0} & \mathbf{0} & . & I_m\\
C_0 & C_1 & C_2 & . & C_{n-1}
\end {pmatrix}
\end{equation}
Then, for the column vector $\mathbf{S}=(\mathbf{s_0}, \mathbf{s_1},\hdots,\mathbf{s_{n-1}})^T$
\begin{eqnarray}\label{eq:efficient_form}
        T{\mathbf S} &=& (\mathbf{s_1}, \mathbf{s_2}, \hdots,\mathbf{s_{n-1}}, C_0\mathbf{s_0} + C_1\mathbf{s_1}+\hdots+ C_{n-1}\mathbf{s_{n-1}})^T 
\end{eqnarray}

Let $M(X)=I_mX^n-C_{n-1}X^{n-1}- \cdots  -C_1X-C_0$. Then $M(X)$ is an $m \times m$ matrix polynomial. We call $M(X)$ as the matrix polynomial corresponding to the polynomial $f(X)$. Using the following results, it is possible to calculate the determinant of an $mn \times mn$  matrix from the determinant of an $m \times m$ matrix \cite [lemma 2.3]{BHH}.

\begin{lemma}\label{dethmatrix1} 
 Let $T$ be the matrix corresponding to the polynomial $f(X)$ of degree $mn$ as defined in \eqref{typeMRMM}. Then the characteristic polynomial of $T$ is equal to the determinant of $M(X)$. 
\end{lemma}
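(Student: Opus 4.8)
The plan is to compute the characteristic polynomial $\det(XI_{mn}-T)$ directly and to identify it with $\det(M(X))$. Since we work over \GFTwo\ the signs are immaterial, but in fact the identity will hold verbatim over any \GFq. The mechanism I would use is to right-multiply the characteristic matrix by a \emph{unimodular} matrix over the polynomial ring $\F_q[X]$ that sweeps all the coefficient blocks $C_0,\dots,C_{n-1}$ into a single corner block equal to $M(X)$, after which the determinant can be read off.

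First I would fix notation for $A(X):=XI_{mn}-T$. From \eqref{typeMRMM} its block $(i,i)$ entry is $XI_m$ (the bottom one being $XI_m-C_{n-1}$), its block superdiagonal $(i,i+1)$ entries are $-I_m$ for $i<n$, and its bottom block row is $(-C_0,-C_1,\dots,-C_{n-2},\,XI_m-C_{n-1})$. Next I would introduce the block lower-triangular matrix $E(X)$ whose $(k,j)$ block equals $X^{k-j}I_m$ for $k\ge j$ and $\mathbf{0}$ otherwise; since its diagonal blocks are all $I_m$, it is unimodular with $\det(E(X))=1$. The heart of the argument is the computation of $B(X):=A(X)E(X)$. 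For each block row $i<n$ and each $j\le i$ the two surviving terms cancel, $XI_m\cdot X^{i-j}I_m-I_m\cdot X^{i+1-j}I_m=\mathbf 0$, so that block row collapses to the single entry $-I_m$ at position $(i,i+1)$. Carrying out the same computation in the bottom block row, the $(n,1)$ block telescopes to $X^nI_m-\sum_{k=1}^{n}C_{k-1}X^{k-1}=M(X)$.

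Thus $B(X)$ has $-I_m$ on its block superdiagonal, $M(X)$ in its lower-left corner, and arbitrary entries elsewhere in the last block row. I would then cyclically permute the block columns, moving the first block column to the last position; this turns $B(X)$ into a block lower-triangular matrix whose diagonal blocks are $-I_m$ (appearing $n-1$ times) and $M(X)$, so its determinant is $(-1)^{m(n-1)}\det(M(X))$. The block-column permutation contributes the sign $(-1)^{m(n-1)}$, these two signs cancel, and I obtain $\det(XI_{mn}-T)=\det(M(X))$, which is the claim (over \GFTwo\ every sign is trivially $1$). The step that needs the most care will be verifying that the corner block of $B(X)$ is exactly $M(X)$ and keeping the permutation sign straight; the payoff of the unimodular multiplication is precisely that it sidesteps the alternative route of inducting on $n$ via a block Laplace expansion along the first column, which would force one to manipulate products of the generally noncommuting blocks $C_i$.
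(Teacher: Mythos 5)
Your argument is correct, but it is worth noting that the paper does not actually prove this lemma at all: it is quoted verbatim from \cite[Lemma~2.3]{BHH}, and the only proof the paper supplies in this section is of the companion statement that $\det(M(X))=f(X)$ (Lemma~\ref{dethmatrix2}), which it handles by telescoping row operations on the $m\times m$ matrix \eqref{hmatrix}. So you have supplied a self-contained proof where the paper offers only a citation. I checked the key steps: with $E(X)$ the block lower-triangular matrix having $(k,j)$ block $X^{k-j}I_m$ for $k\ge j$, one indeed gets $\det E(X)=1$, the cancellation $XI_m\cdot X^{i-j}I_m-I_m\cdot X^{i+1-j}I_m=\mathbf{0}$ wipes out block row $i<n$ except for the $-I_m$ at $(i,i+1)$, and the bottom-left block of $A(X)E(X)$ telescopes to $X^nI_m-\sum_{k=1}^{n}C_{k-1}X^{k-1}=M(X)$ exactly as the paper defines it. The sign bookkeeping is also right: the cyclic block-column shift has sign $(-1)^{m(mn-m)}=(-1)^{m(n-1)}$, which cancels against $\det(-I_m)^{n-1}=(-1)^{m(n-1)}$, giving $\det(XI_{mn}-T)=\det(M(X))$ over any $\mathbb{F}_q$. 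Two small remarks. First, your proof nowhere uses the special sparse shape of the blocks $C_i$ from \eqref{C_i}--\eqref{C_0}; it establishes the identity for an arbitrary block companion matrix, which is the right level of generality and matches the scope of the cited result in \cite{BHH}. Second, the paper's convention is $P(z)=\det(T-Iz)$ rather than $\det(zI-T)$; over $\mathbb{F}_2$ this is immaterial as you note, but your ``verbatim over any $\mathbb{F}_q$'' claim is tied to the monic convention $\det(XI_{mn}-T)$, which is the one consistent with $M(X)$ having leading block $I_mX^n$.
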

\begin{lemma}\label{dethmatrix2}
 Let $M(X)$ be the matrix polynomial corresponding to the polynomial $f(X)$ of degree $mn$ over \GFq . Then the determinant $|M(X)|$ is equal to $f(X)$. 
\end{lemma}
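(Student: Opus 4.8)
The plan is to compute $\det M(X)$ directly by exploiting the very rigid support of the matrices $C_i$, via a single Laplace expansion along the last column. First I would make the entries of $M(X)$ explicit. Since $C_1,\ldots,C_{n-1}$ are supported only on their last column, while $C_0$ contributes both the subdiagonal of $1$'s and its own last column, the matrix $M(X)=I_mX^n-C_{n-1}X^{n-1}-\cdots-C_0$ has $X^n$ on the diagonal, $-1$ on the subdiagonal, and zeros elsewhere in columns $1,\ldots,m-1$; its last column carries the polynomials $M_{r,m}=-\sum_{i=0}^{n-1}a_{(r-1)n+i}X^i$ for $r<m$, with an extra $X^n$ added in the $(m,m)$ entry. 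Thus all of the arithmetic content of $f$ is packed into the last column, grouped by residues modulo $n$, and columns $1,\ldots,m-1$ carry only the bidiagonal skeleton.

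Next I would expand along this last column, $\det M(X)=\sum_{r=1}^{m}(-1)^{r+m}M_{r,m}\,\det(\widehat{M}_{r,m})$, where $\widehat{M}_{r,m}$ is the minor obtained by deleting row $r$ and column $m$. The reason for choosing the last column is that every such minor is block lower-triangular: the rows above $r$ meet columns $1,\ldots,r-1$ in a lower-triangular block with $X^n$ on the diagonal, contributing $(X^n)^{r-1}$, while the rows below $r$ meet columns $r,\ldots,m-1$ in an upper-bidiagonal block with $-1$ on the diagonal, contributing $(-1)^{m-r}$; the top-right off-diagonal block is zero. Hence $\det(\widehat{M}_{r,m})=(X^n)^{r-1}(-1)^{m-r}$, and the cofactor sign cancels perfectly, since $(-1)^{r+m}(-1)^{m-r}=(-1)^{2m}=1$. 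This leaves the clean identity $\det M(X)=\sum_{r=1}^{m}M_{r,m}(X^n)^{r-1}$.

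Finally I would substitute the last-column entries and re-index. Multiplying $M_{r,m}=-\sum_{i=0}^{n-1}a_{(r-1)n+i}X^i$ by $X^{(r-1)n}$, the double sum over $(r,i)$ collapses under the bijection $(r,i)\mapsto k=(r-1)n+i$ between $\{1,\ldots,m\}\times\{0,\ldots,n-1\}$ and $\{0,1,\ldots,mn-1\}$, reassembling $\sum_{k=0}^{mn-1}a_kX^k$; the leftover $X^{(m-1)n}\cdot X^n=X^{mn}$ from the $(m,m)$ entry supplies the leading term. The computation therefore yields $\det M(X)=X^{mn}-\sum_{k=0}^{mn-1}a_kX^k$, which over the binary field (where the signs are irrelevant and the primitive $f$ is monic, $a_{mn}=1$) is exactly $f(X)=\sum_{i=0}^{mn}a_iX^i$.

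The genuinely delicate step is the minor evaluation: one must check, uniformly in $r$, that $\widehat{M}_{r,m}$ really splits into the lower-triangular $X^n$-block and the upper-bidiagonal $(-1)$-block, and then track the two sign factors so that nothing stray survives. By contrast, the re-indexing is routine once one notices that the $C_i$ were deliberately designed to store the coefficients $a_{jn+i}$ in precisely the interleaved pattern matched by $k=(r-1)n+i$. I would also flag the harmless characteristic-$2$ (monic) normalization that converts $X^{mn}-\sum_{k<mn}a_kX^k$ into $f(X)$ verbatim, which is the only place the statement specializes from a general $\mathbb{F}_q$ identity to the binary setting of interest.
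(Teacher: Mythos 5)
Your proof is correct, but it takes a different computational route from the paper's. Both arguments start from the same explicit form of $M(X)$ (bidiagonal skeleton in the first $m-1$ columns, all of $f$ packed into the last column, grouped by residues mod $n$). The paper then performs a chain of elementary row operations implementing Horner's scheme from the bottom row upward: each step clears one diagonal $X^n$ and folds it into the last column, producing nested entries $g_{m-1},\dots,g_0$ with $g_0=f(X)$, after which the determinant of the resulting permutation-like matrix is read off as $(-1)^{2(m-1)}g_0$. You instead do a single Laplace expansion along the last column and evaluate each complementary minor $\widehat{M}_{r,m}$ as a block-triangular product $(X^n)^{r-1}(-1)^{m-r}$, with the cofactor sign cancelling; your resulting identity $\det M=\sum_r M_{r,m}(X^n)^{r-1}$ is exactly the expanded (un-nested) form of the paper's $g_0$. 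The two computations are of essentially equal difficulty; your version makes the sign bookkeeping fully explicit, whereas the paper's displayed matrix writes $+f_i$ in the last column and silently uses characteristic $2$ (or absorbs the signs into the $f_i$). Your closing remark that the clean identity is $X^{mn}-\sum_{k<mn}a_kX^k$, which coincides with $f(X)$ only after the monic/characteristic-$2$ normalization, is a point the paper glosses over and is worth keeping.
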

\begin{proof}
The matrix form of $M(X)$ is 
\begin{equation} \label{hmatrix}
M(X)=
\begin{pmatrix}
  X^n    &    0   &  0   &  \cdots &    0   & f_0 \\
  -1      &    X^n &  0   &  \cdots &    0   & f_1 \\
  0      &    -1   &  X^n &  \cdots &    0   & f_2 \\
  \vdots & \vdots & \vdots & \ddots & \vdots & \vdots   \\
  0      &    0   &  0   &   \cdots &   X^n  & f_{m-2} \\
  0      &    0   &  0   &   \cdots &    -1   & f_{m-1}+f_mX^n 
\end{pmatrix}_{m \times m}
\end{equation}
where $f_i(X)=\displaystyle \sum^{(i+1)n-1}_{k=in}a_kX^{k-in}$ for $i=0,1,\ldots ,(m-1)$ and $f_m(X)=a_{mn}\neq 0$. Multiply $X^n$ with the $n^{\rm th}$ row and add to the $(n-1)^{\rm th}$ row of the above matrix $M(X)$. This will remove $X^n$ from the $(n-1)^{\rm th}$ row without any change in the determinant. Now, add $X^n$ 
times the new $(n-1)^{\rm th}$ row to the $(n-2)^{\rm th}$ row. This will remove $X^n$ from the $(n-2)^{\rm th}$ row. Continue this procedure till all the ${X^n}$ terms on the main diagonal have been removed. Then, the resultant matrix will have the same determinant as $M(X)$ and it will be in the following form
$$\begin{pmatrix}
  0    &    0   &  0   &  \cdots &    0   & g_0  \\
  -1      &    0 &  0  &  \cdots &    0   & g_1 \\
  0      &    -1   &  0 &  \cdots &    0   & g_2 \\
  \vdots & \vdots & \vdots & \ddots & \vdots & \vdots   \\
  0      &    0   &  0   &   \cdots &   0  & g_{m-2} \\
  0      &    0   &  0   &   \cdots &    -1   & g_{m-1}
\end{pmatrix}_{m\times m} $$
where
\begin{eqnarray}
        g_0 & =&  f_0+X^n\left(f_1+ X^n\left(f_2+ \cdots +X^n\left(f_{m-1}+X^nf_m\right)\cdots\right)\right) \nonumber \\
         g_1 &=& f_1+ X^n\left(f_2+ \cdots +X^n\left(f_{m-1}+X^nf_m\right)\cdots\right) \nonumber \\
         g_2 &=& f_2+ \cdots +X^n\left(f_{m-1}+X^nf_m\right) \nonumber \\
         & \vdots & \nonumber \\
         g_{m-2} &=& f_{m-2}+X^n\left(f_{m-1}+X^nf_m\right) \nonumber \\
         g_{m-1} &=& f_{m-1}+X^nf_m \nonumber          
\end{eqnarray}
After suitable operations, it can be shown that 
$$ 
\det\left(M(X)) \right) = 
\det \begin{pmatrix}
    0    &    0   &  0   &  \cdots &    0   & g_0  \\
  -1      &    0 &  0  &  \cdots &    0   & 0 \\
  0      &    -1   &  0 &  \cdots &    0   & 0 \\
  \vdots & \vdots & \vdots & \ddots & \vdots & \vdots   \\
  0      &    0   &  0   &   \cdots &   0  & 0 \\
  0      &    0   &  0   &   \cdots &    -1   & 0
\end{pmatrix} = (-1)^{2(m-1)}g_0.
$$
But $g_0=f(X)$ and thus proves the lemma. 
\end{proof}
From Lemma \ref{dethmatrix1} and Lemma \ref{dethmatrix2}, it is clear that the characteristic polynomial of $T$ is primitive if the polynomial $f(X)$ is primitive. Therefore, if $f(X)$ is primitive, then $T$ is full periodic i.e., $ord(T)=(2^{mn}-1)$ \cite{lidl, MOV}. Our next goal is to show that the matrix operator $T$ belongs to the class of xorshift RNGs.
Let $\mathbf{S}=(\mathbf{s_0}, \mathbf{s_1},\hdots,\mathbf{s_{n-1}})^T \in \mathbb{F}_2^{mn}$, where each $\mathbf{s_i} \in \mathbb{F}_2^{m}$ and then using equation-\ref{typeMRMM} 
\begin{eqnarray}\label{eq:efficient_form}
        T{\mathbf S} &=& (\mathbf{s_1}, \mathbf{s_2}, \hdots,\mathbf{s_{n-1}}, C_0\mathbf{s_0} + C_1\mathbf{s_1}+\hdots+ C_{n-1}\mathbf{s_{n-1}})
\end{eqnarray}
Note that all the matrix coefficients $C_i$ given in equation \eqref{C_i} and \eqref{C_0} used in equation-\ref{eq:efficient_form} have a special form. The construction algorithm for xorshift RNGs takes the advantage of these special structures. It is easy to see that the matrix $C_0$ can be written as $C_0=R+\widehat{C_0}$, where $R$ is the right shift operator and $\widehat{C_0}$ is an $m \times m$ matrix having first $(m-1)$ zero columns. Note that the last column of both $\widehat{C_0}$ and $C_0$ are same and the structure of $\widehat{C_0}$ is exactly same as $C_j$, for $j\geq 1$. 
Because of the following lemma, we will show that $T\mathbf{S}$ can be computed using only xorshift operations.
 \begin{lemma}\cite{BHH}\label{thm:efficient_lemma}
 For any matrix $A \in M_m(\mathbb{F}_2)$ having all the columns zero except the $m^{\rm th}$ column and 
 for any vector ${\mathbf s} = [s_0, s_1, \ldots, s_{m-1}]^T \in \mathbb{F}_2^m$, we have
\[
    A {\mathbf s} =  s_{m-1} {\mathbf v}_m
\]
where ${\mathbf v}_m$ represents the $m^{\rm th}$ column of the matrix $A$.
\end{lemma}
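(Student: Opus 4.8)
The plan is to prove the identity by the standard column interpretation of matrix--vector multiplication, which converts the product into a linear combination of the columns of $A$ weighted by the entries of $\mathbf{s}$. First I would denote the columns of $A$ by $\mathbf{v}_1, \mathbf{v}_2, \ldots, \mathbf{v}_m$, so that $A = [\,\mathbf{v}_1 \mid \mathbf{v}_2 \mid \cdots \mid \mathbf{v}_m\,]$. The hypothesis that all columns except the $m^{\rm th}$ are zero translates directly into $\mathbf{v}_1 = \mathbf{v}_2 = \cdots = \mathbf{v}_{m-1} = \mathbf{0}$, with only $\mathbf{v}_m$ possibly nonzero.

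Next I would expand the product using the rule that $A\mathbf{s}$ equals the sum of each column of $A$ scaled by the corresponding coordinate of $\mathbf{s}$. Matching the zero-based indexing $\mathbf{s} = [s_0, s_1, \ldots, s_{m-1}]^T$ against the one-based column indexing gives
\[
    A\mathbf{s} = \sum_{j=1}^{m} s_{j-1}\,\mathbf{v}_j = s_0\mathbf{v}_1 + s_1\mathbf{v}_2 + \cdots + s_{m-1}\mathbf{v}_m.
\]
Since $\mathbf{v}_1 = \cdots = \mathbf{v}_{m-1} = \mathbf{0}$, every term in this sum vanishes except the last, leaving $A\mathbf{s} = s_{m-1}\mathbf{v}_m$, which is the claimed identity.

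The computation itself presents no genuine obstacle, as it is an immediate consequence of the definition of matrix multiplication over $\mathbb{F}_2$. The only point requiring care is keeping the two indexing conventions consistent---the coordinates of $\mathbf{s}$ run from $0$ to $m-1$ while the columns $\mathbf{v}_j$ run from $1$ to $m$---so that the scalar multiplying $\mathbf{v}_m$ is correctly identified as $s_{m-1}$ rather than $s_m$. With this bookkeeping settled, the single surviving term delivers the result directly, and the value being in $\mathbb{F}_2$ plays no special role beyond fixing the ground field.
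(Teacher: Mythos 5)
Your proof is correct: the column-expansion of $A\mathbf{s}$ with all but the last column zero immediately yields $s_{m-1}\mathbf{v}_m$, and you handle the off-by-one between the zero-based coordinates of $\mathbf{s}$ and the one-based columns properly. The paper itself gives no proof of this lemma (it is quoted from the reference [BHH]), and your argument is exactly the standard one that would be expected there, so there is nothing further to compare.
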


By invoking Lemma \ref{thm:efficient_lemma}, $T\mathbf{S}$ can be rewritten as follows:
\begin{eqnarray*}
        T{\mathbf S} = (\mathbf{s_1}, \mathbf{s_2}, \hdots,\mathbf{s_{n-1}},\mathbf{s_{n}})        
\end{eqnarray*}
where, 
\begin{eqnarray}\label{eq:efficient_form1}
\mathbf{s_{n}}=(R{\mathbf s}_i + \alpha_0 {{\mathbf v}{}_0} + \alpha_1 {\mathbf v}_1 + \cdots + \alpha_{n-1} {\mathbf v}_{n-1}) \label{eqeffi}
\end{eqnarray}
and $\alpha_i$ is the least significant bit (LSB) of ${\mathbf s}_i$ and ${\mathbf v}_i$ is the $m^{\rm th}$ 
column of the matrix $C_i$ ($0\leq i\leq n-1)$, that is ${\mathbf v}_i=[a_i,a_{n+i},\hdots ,a_{(m-1)n+i}]^T$. 
It is clear that the equation \eqref{eq:efficient_form1} can be computed by using only one right shift operation and at most $n$ XOR operations and thus, it falls into the class of Marsaglia's xorshift RNGs. We call equation \eqref{eq:efficient_form1} as feedback computation function.

Now we are in a position to propose the construction algorithm for xorshift RNGs.
The sequential steps of the construction algorithm are described in Algorithm \ref{Algo1}.
\begin{algorithm} 
  \caption{Construction of primitive xorshift RNGs} \label{Algo1}
    \begin{algorithmic}[1]
    \REQUIRE A primitive polynomial $f(X)$ of degree $mn$ over $\mathbb F_2$.
    \ENSURE A xorshift RNG of order $n$ over $\mathbb F_{2^m}$. 
    \STATE Construct the matrix coefficients $C_i$s as in equations \eqref{C_i} and \eqref{C_0}
    \STATE Construct the matrix $T$ as described in the equation-\ref{typeMRMM}
    \STATE Return the matrix $T$
  \end{algorithmic}
\end{algorithm}

The complexity of the Algorithm \ref{Algo1} is O(1) as it generates a primitive xorshift generator $T$ from a given 
primitive polynomial just by expressing the coefficients in matrix form.

\begin{lemma}\label{lemma_tap_points}
 The primitive xorshift RNGs of order $n$ over \GFTwom\ generated by Algorithm \ref{Algo1} will have at least two  and at most $(n+1)$ xorshift operations in the feedback function computation.
\end{lemma}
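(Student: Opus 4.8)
The plan is to read the operation count directly off the feedback computation function \eqref{eq:efficient_form1} and then establish the two bounds separately by examining the vectors $\mathbf{v}_i$.

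First I would recall that, by \eqref{eq:efficient_form1}, the newly generated word is
\[
\mathbf{s}_n = R\mathbf{s}_0 + \alpha_0 \mathbf{v}_0 + \alpha_1 \mathbf{v}_1 + \cdots + \alpha_{n-1}\mathbf{v}_{n-1},
\]
where $\mathbf{v}_i = [a_i, a_{n+i}, \ldots, a_{(m-1)n+i}]^T$ is the $m^{\rm th}$ column of $C_i$. The right shift $R\mathbf{s}_0$ always accounts for exactly one shift operation: it arises from the structural decomposition $C_0 = R + \widehat{C_0}$ and is present regardless of the input polynomial $f$. Each remaining summand $\alpha_i\mathbf{v}_i$ is either identically $\mathbf{0}$ (when $\mathbf{v}_i = \mathbf{0}$, so no operation is needed) or contributes a single XOR of the fixed vector $\mathbf{v}_i$. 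Hence the number of XOR operations equals the number of indices $i \in \{0,1,\ldots,n-1\}$ for which $\mathbf{v}_i \neq \mathbf{0}$, and the total operation count is $1$ (the shift) plus this number.

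For the upper bound I would simply observe that there are exactly $n$ vectors $\mathbf{v}_0,\ldots,\mathbf{v}_{n-1}$, so at most $n$ of them can be nonzero; this gives at most $n$ XORs and hence at most $n+1$ operations in total. For the lower bound, the key point is that primitivity of $f(X)$ over \GFTwo{} forces the constant term $a_0 = f(0)$ to be nonzero, since the only irreducible polynomial over $\mathbb{F}_2$ with zero constant term is $X$, which is not primitive. As $a_0$ is the first coordinate of $\mathbf{v}_0$, we obtain $\mathbf{v}_0 \neq \mathbf{0}$, so at least one XOR is always required; together with the single shift this yields at least $2$ operations.

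The step requiring the most care is the lower bound, namely guaranteeing that at least one $\mathbf{v}_i$ is nonzero no matter which primitive polynomial is supplied to Algorithm \ref{Algo1}; this is precisely where primitivity is invoked, through the implication $a_0 \neq 0 \Rightarrow \mathbf{v}_0 \neq \mathbf{0}$. The remainder is bookkeeping on the number of summands in \eqref{eq:efficient_form1}. I would finish by combining the two bounds to conclude that the feedback computation uses between $2$ and $n+1$ xorshift operations.
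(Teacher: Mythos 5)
Your proposal is correct and follows essentially the same route as the paper: the lower bound comes from the ever-present right shift $R$ together with $\mathbf{v}_0 \neq \mathbf{0}$ (forced by $a_0 \neq 0$ for a primitive polynomial), and the upper bound from the fact that \eqref{eq:efficient_form1} has at most $n+1$ summands. Your version is slightly more careful than the paper's in spelling out why primitivity forces $a_0 \neq 0$ and in writing the shift term as $R\mathbf{s}_0$ (consistent with the decomposition $C_0 = R + \widehat{C_0}$), but the argument is the same.
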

\begin{proof}
Algorithm \ref{Algo1} generates primitive xorshift RNGs from the primitive polynomial and again, the constant term of the primitive polynomial must be nonzero i.e.,  $a_0\neq 0$. This implies ${\mathbf v}_0 \neq 0$ as ${\mathbf v}_0=[a_0,a_{n},\hdots ,a_{(m-1)n}]^T$. Again, in the recurrence relation \eqref{eqeffi}, the right shift $R$ will be present irrespective of any polynomial. 
So there will be at least two xorshift operations in the feedback computation. Also, in equation \eqref{eqeffi}, there are almost $(n+1)$ nonzero terms. This completes the proof.
\end{proof}

\section{A note on construction algorithm for xorshift generators} \label{implementation}
The xorshift generators generated from the primitive polynomials and so are full periodic. For every primitive polynomial of degree $mn$, it constructs distinct xorshift generator. Therefore, total number of full periodic xorshift generators of order $n$ over the field \GFTwom\ produced by this algorithm is $\frac{\phi(2^{mn}-1)}{mn}$. 
\subsection{Weakness in Initialization of xorshift generator States}
The primitive xorshift RNGs generated by the construction algorithm have efficient software implementation property, however from cryptographic point of view they have a weakness similar to
the Lagged Fibonacci Generator (LFG)\cite{Knuth,Brent2}. In LFG, if all states are initialized with even numbers, then the feedback value will be always even in every iteration. To counter this weakness, at least one state of the LFG must be initialized with an odd value. Similar kind of weakness is also present in the xorshift RNGs constructed by Algorithm \ref{Algo1}. If first $(n-1)$ states (i.e., ${\mathbf s}_0,{\mathbf s}_1,\hdots,{\mathbf s}_{n-2}$)  of the xorshift RNG generated by the construction algorithm are even, then there will be only one active term in the feedback value computation i.e., $R{\mathbf s}_i$. This happens because the $\alpha_i$ defined in equation \eqref{eq:efficient_form1} is the least significant bit (LSB) of the state ${\mathbf s}_i$ and so equal to zero for even value of ${\mathbf s}_i$. If all states are multiple of $2^l$ i.e., ${\mathbf s}_i=2^lk_i$, then there will be only one active component in the feedback function computation till $nl$ many iterations and for $0<j<nl$, 
${\mathbf s}_{n+j}=R^{j_1}{\mathbf s}_{j_2}$, where $j_1=\lceil \frac{j}{n} \rceil$ and $j_2=(j-1)$ mod $n$.

In particular, if the states ${\mathbf s}_i$ for $0\leq i < (n-1)$ are zero vectors and ${\mathbf s}_{n-1}=d$, where  $d$ is a multiple of $2^l$, for some integer $l>0$, then the table-\ref{table_initail_states} gives the states of xorshift generator after the subsequent iterations.

\begin{table}[h] 
  \centering
  \begin{tabular}{|c|c|}
  \hline
  Iteration No. & States of xorshift generator \\ \hline
  1 & $( d,\overbrace{\mathbf{0},\mathbf{0},\cdots,\mathbf{0}}^{\textnormal{(n-1) times}})$\\  \hline
  2 & $( \mathbf{0},d,\overbrace{\mathbf{0},\mathbf{0},\cdots,\mathbf{0}}^{\textnormal{(n-2) times}})$\\  \hline
  3 & $( \mathbf{0},\mathbf{0},d,\overbrace{\mathbf{0},\mathbf{0},\cdots,\mathbf{0}}^{\textnormal{(n-3) times}})$\\  \hline
  \vdots & \vdots\\  \hline
  $n$ & $(\overbrace{\mathbf{0},\mathbf{0},\cdots,\mathbf{0}}^{\textnormal{(n-1) times}},d)$\\ \hline
  $n+1$ & $(\frac{d}{2},\overbrace{\mathbf{0},\mathbf{0},\cdots,\mathbf{0}}^{\textnormal{(n-1) times}})$\\ \hline
  $n+2$ & $(\mathbf{0},\frac{d}{2},\overbrace{\mathbf{0},\mathbf{0},\cdots,\mathbf{0}}^{\textnormal{(n-2)times}})$\\  \hline
    \vdots & \vdots\\  \hline
\end{tabular}\caption{States of xorshift RNG} 
\label{table_initail_states}
\end{table} 

If the content of stage 0 is the output word in each iteration, then the first $nl$ words of
the output sequence produces by the xorshift generator  is as follows
\begin{center}
 $\overbrace{\mathbf{0},\mathbf{0},\cdots,\mathbf{0}}^{\textnormal{(n-1) times}},\dfrac{d}{2},\overbrace{\mathbf{0},\mathbf{0},\cdots,\mathbf{0}}^{\textnormal{(n-1) times}},\dfrac{d}{2^2},\cdots,  
 \overbrace{\mathbf{0},\mathbf{0},\cdots,\mathbf{0}}^{\textnormal{(n-1) times}},\dfrac{d}{2^l}, \cdots$
\end{center}

Here each word is $m-$bit wide. There are $(n-1)$ zero vectors in each $n$ consecutive output words till the $nl^{th}$ iteration. Note that, with initial states $(d,\mathbf{0},\mathbf{0},\cdots,\mathbf{0})$ with $d=2^lk$, the first $ln$ outputs are same irrespective of any primitive xorshift RNGs of order $n$ constructed by Algorithm \ref{Algo1}. So, the initial value of the states of the xorshift generator produced by the construction algorithm are significant for the quality of pseudorandom vectors generation. To avoid this weakness, the initial states of the xorshift RNG need be initialized with odd numbers. In such case, all $\alpha_i$ will be equal to $1$ at the first iteration and there will be maximum number of active terms for the feedback function computation.

\subsection{Different xorshift generators from same binary primitive polynomial}
One of the important thing of the construction algorithm is that it produces different primitive xorshift generators of different order from a given binary primitive polynomial of degree $mn$.
Since in most of the operating system, the word size is of the form $2^k$, we have considered $mn=2^k$ for some positive integer $k$.
Again for $mn=2^k$, there will be $(k-1)$ distinct possible choices for $m$ i.e., $1,2^1,\hdots,2^{k-1}$. For each value of $m$, the construction algorithm returns $n$ 
vectors $\{v_0,v_1,\hdots,v_{n-1}\}$, where each $v_i$ is of $m$-bit length. For better understanding, see the following example.
\begin{example}
 Let us consider the binary primitive polynomial $f(x)=x^{32}+x^{31}+x^{27}+x^{26}+x^{25}+x^{20}+x^{19}+x^{15}+x^{14}+x^{11}+x^{9}+x^{7}+x^{6}+x^{5}+x^{4}+x^{2}+1$. 
 Here degree of $f(x)$ is 32 and so $mn=32=2^5$. Then  the set of possible choices for $m$ is $\{1,2^1,2^2,2^3,2^4\}$. But, we are only considering $m=2^3$ and $2^4$. The respective xorshift RNGs constructed using Algorithm \ref{Algo1} are given below.
 
  \begin{enumerate}
    \item For $m=2^3$ and $n=2^2$:\\$v_0=0$x$f7$, $v_1=0$x$54$, $v_2=0$x$73$, $v_3=0$x$bf$.
    
    \item For $m=2^4$ and $n=2$:\\$v_0= 0$x$bf2f$, $v_1= 0$x$6775$.
  \end{enumerate}
\end{example}

From equation \eqref{eqeffi}, it is clear that for a xorshift generator of order $n$ over the field ${\mathbb F}_{2^{m}}$ requires 
following operations in each iteration
\begin{itemize}
  \item For computation of the feedback value $fd$ as given in \eqref{eq:efficient_form1}, it requires one right shift operation and at most $n$ XOR operations.
  \item $n$ state shifting operations i.e., $\mathbf{s_i}=\mathbf{s_{i+1}}$ for $i=0,1,\hdots,n-2$ and $\mathbf{s_{n-1}}=fd$.
\end{itemize}

Then, using Lemma \ref{lemma_tap_points}, it is clear that at least ($n+2)$ and at most $(2n+1)$ xorshift operations are needed to produce an $m-$bit word in each cycle. 
Thus, to produce a bitstream of length $l$, it will take $\lceil \frac{l}{m}\rceil$ many iterations. If $N$ is the total number of word operations (XOR, right shift and shifting), then $(n+2)\lceil \frac{l}{m}\rceil \leq  N \leq (2n+1)\lceil \frac{l}{m}\rceil$. Suppose for a binary primitive polynomial of degree $mn$, two separate primitive xorshift RNGs (RNG1 and RNG2) are generated with word size $m_1$ and $m_2$ respectively, where $m_2=2m_1$. Let, the respective order of RNG1 and RNG2 be $n_1$ and $n_2$, then $n_1=\frac{mn}{m_1}$ and $n_2=\frac{mn}{m_2}=\frac{mn}{2m_1}$. If $N_1$ and $N_2$ be the total number of operations required to generate bitstream of length $l$, then we have
\begin{eqnarray*}
         (n_1+2)\lceil \frac{l}{m_1}\rceil \leq  &N_1& \leq (2n_1+1)\lceil \frac{l}{m_1}\rceil \\
         (n_2+2)\lceil \frac{l}{m_2}\rceil \leq  &N_2& \leq (2n_2+1)\lceil \frac{l}{m_2}\rceil     
\end{eqnarray*}
Therefore,
\begin{eqnarray}\label{n1_n2_1}
         N_1 &\geq& (n_1+2)\lceil \frac{l}{m_1}\rceil = (2n_2+2)\lceil \frac{2l}{m_2}\rceil \nonumber \\
             &\geq& 2(2n_2+2) (\lceil \frac{l}{m_2}\rceil -1) \nonumber \\
             &=& 2(2n_2+1) \lceil \frac{l}{m_2}\rceil + 2 \lceil \frac{l}{m_2}\rceil  \nonumber  \\
             &>& 2N_2 
\end{eqnarray}
Again,
\begin{eqnarray}\label{n1_n2_2}
         N_1 &\leq& (2n_1+1)\lceil \frac{l}{m_1}\rceil = (4n_2+1)\lceil \frac{2l}{m_2}\rceil \nonumber \\
             &\leq& 2(4n_2+1)\lceil \frac{l}{m_2}\rceil  \nonumber  \\
             &<& 8N_2 
\end{eqnarray}

Using equations \eqref{n1_n2_1} and \eqref{n1_n2_2}, we have $2 < \frac{N_1}{N_2} < 8$. Therefore, for larger value of $m$, the xorshift generator will take lesser number of word operations to produce the bitstream of desired length $l$
and so will take lesser time which is reflected in our experimental results given in the table \ref{table:Performance results}. 
\begin{table}[h]
  \centering
  \begin{tabular}{|c|c|c|}
  \hline
    Word size & Time taken (sec) & Time taken (sec) \\ 
	$m$   & to generate $10^9$ bits & to generate $10^{10}$ bits \\ \hline
    8 &  78.6 & 841 \\  \hline
    16 & 20.02 & 215 \\  \hline
    32 & 5.95 & 62.74 \\  \hline
    64 & 1.94 & 19.17 \\  \hline 
  \end{tabular}\caption{Average timing for different values of $m$}
  \label{table:Performance results}
\end{table} 

For our experiment, we have taken $mn=512$ and the bitstream length $l$ as $10^{9}$ and $10^{10}$. Then for $m=8,16,32,64$, measured the average time taken to generate bitstream of length $l$. It is observed that if the word size $m$ is increased by 2, then the time taken reduced by $c$ to generate a fixed length bitstream, where $2<c<8$. The construction algorithm for primitive xorshift RNGs is implemented in C and the used Test machine is Intel Xeon(R) CPU E5645 @ 2.40GHz x 12 with 8 GiB memory and 64-bit Linux operating system. The table \ref{table:Performance results} summarize the results, which tells that it is better to select the primitive xorshift RNG having larger word size $m$ (i.e., 32 or 64) so as to take the advantage of modern word based operations.

\subsection{Primitive xorshift generator with Desired Number of Tap Points} \label{subsec_tap_pointsNo}
The effect of number of tap points in the LFSR (i.e., the number of nonzero coefficients) is important for cryptographic usage while choosing a primitive polynomial. Because an LFSR with less number of tap positions is susceptible to fast correlation attack \cite{CS,CJS}. The distribution of polynomials over \GFTwo\ with respect to their weights are well studied in \cite{Mishra}. It is desirable to select the primitive polynomial whose weight is close to $\frac{n}{2}$ i.e., the  polynomial is neither too sparse nor too dense \cite{Compagner, LECUYER}. However, in certain areas like light weight cryptography, it is preferable to have less number of nonzero tap positions. So, it is required to have an algorithm which could generate primitive xorshift RNGs of order $n$ over \GFTwom\ with desired number of tap points $k$, where $1<k<n+2$. In case of Marsaglia's xorshift RNGs, there are total six operations (i.e., three XOR and three shifting) are used in the feedback function computation. Algorithm \ref{AlgoTap} 
produces such primitive RNGs with desired number of xorshift operations $k$.

\begin{algorithm}
\caption{Primitive xorshift generator with $k$ xorshift operations} \label{AlgoTap}
\begin{algorithmic}[1]
\REQUIRE Three positive integers $m,n$ and $k$.
\ENSURE A primitive xorshift generator of order $n$ over \GFTwom\ having $k$ xorshift operations.
\STATE Generate a random polynomial $f(X)=\displaystyle\sum_{i=0}^{mn}a_iX^{i}$ as follows
\STATE $i=1$
\WHILE {$i<k$}
\IF{$i=1$}
\STATE $l=0$
\ELSE
\STATE $l = rand$( ) mod $n$ 		/*generating random index*/
\WHILE{$l \in S$}
  \STATE $l = rand$( ) mod $n$
\ENDWHILE
\ENDIF
\STATE $S=S\cup \{l\}$
\STATE $j=1$
\WHILE {$j<m$}
\STATE $a_{l+(j-1)n}= rand$( ) mod 2  	/*generating random bit*/
\STATE $j=j+1$
\ENDWHILE
\STATE $i=i+1$
\ENDWHILE
\STATE $a_0=1$ and $a_{mn}=1$
\STATE if $f(X)$ is not primitive, go to step-1.
\STATE else, using Algorithm-\ref{Algo1} return the required xorshift generator from $f(X)$.
\end{algorithmic}
\end{algorithm}

From lemma-\ref{lemma_tap_points}, it is shown that in case of primitive xorshift RNGs there will be at least two operations for its feedback computation (i.e., $R$ and $\alpha_0{\mathbf v}_0$). Therefore, for getting a primitive xorshift generator with $k$ xorshift operations for its feedback computation, Algorithm \ref{AlgoTap} assigns random binary value to the coefficients needed for the matrix coefficient $C_0$ as given in equation \eqref{C_0} with $a_0=1$. Next it selects  $(k-2)$ distinct random integer $i$ such that $0< i <n$ and then constructs the random binary matrix coefficients $C_i$ as described in equation \eqref{C_i}. Finally, assign $a_{mn}=1$ so that the polynomial $f(X)=\displaystyle\sum_{i=0}^{mn}a_iX^{i}$ will be a polynomial of degree $mn$.
If $f(x)$ is primitive, then Algorithm \ref{Algo1} returns the desired primitive xorshift generator.

\section{Conclusion}\label{conclusion}
In this paper, we have proposed two algorithms related to Marsaglia's xorshift RNGs. Algorithm \ref{Algo1} constructs primitive xorshift generator from a given primitive polynomial. We studied those xorshift generators and found a common weakness in all those generators. It is shown that the states of those xorshift generators need to be initialized carefully and is suggested that all the states to be initialized with odd numbers. We have shown that several primitive xorshift generators of different order  can be constructed from a given primitive polynomial of degree $mn$ using the construction algorithm. We also shown that for the larger value of word size $m$, the xorshift generator takes less time to produce a bitstream of the desired length $l$. So, in case of software implementations, it is suggested to select the primitive xorshift generators with a larger word size $m$ (i.e., 32 or 64) to take the advantage of modern word based operations. Finally, We have provided another algorithm that produces 
efficient primitive xorshift generator with desired number of xorshift operations needed for computation of its feedback function .

\end{document}